\documentclass[11pt]{amsart}
\usepackage{amsmath,amsthm,amscd,amsfonts,amssymb}
\usepackage[all]{xy}
\usepackage[ansinew]{inputenc} 


\newtheorem{thm}{Theorem}[section]  

\newtheorem{lema}[thm]{Lemma}
\newtheorem{prop}[thm]{Proposition}
\theoremstyle{definition}
\newtheorem{defi}[thm]{Definition}

\theoremstyle{remark}

\numberwithin{equation}{section}


\begin{document}

\title[Klein-Gordon equation from
Maxwell-Lorentz dynamics]
 {Klein-Gordon equation from
 Maxwell-Lorentz dynamics}

\author{R. J.  Alonso-Blanco}

\address{Departamento de Matem\'{a}ticas,
  Universidad de Salamanca, Plaza de la Merced 1-4, E-37008 Salamanca,
  Spain.}
\email{ricardo@usal.es}

\begin{abstract}
We consider Maxwell-Lorentz dynamics: that is to say, Newton's law
under the action of a Lorentz's force which obeys the Maxwell
equations. A natural class of solutions are those given by the
Lagrangian submanifolds of the phase space when it is endowed with
the symplectic structure modified by the electromagnetic field. We
have found that the existence of this type of solution leads us
directly to the Klein-Gordon equation as a compatibility
condition. Therefore, surprisingly, quite natural assumptions on
the classical theory involve a quantum condition without any
process of limit. This result could be a partial response to the
inquiries of Dirac in \cite{Dirac1}.
\end{abstract}

\maketitle

\centerline{\today}

\section*{Introduction}

Since the beginning of Quantum Mechanics, physicists have linked
the new mechanics to the classical one. This is bound because it
is not possible to move forward in the vacuum and by the success
of the Newtonian and  Maxwellian theories. For example,
Schr\"{o}dinger \cite{Schrodinger} justifies his famous wave
equation by appealing to the classical Hamilton-Jacobi equation.
In this paper we will prove that the geometry of the Newton
equation (as exposed in \cite{MecanicaMunoz}) leads naturally to
the (generalized) Klein-Gordon equation. More specifically, we
will see that the Klein-Gordon equation is a necessary condition
for the existence of charge-current fields obeying the Lorentz
law, if such fields correspond to Lagrangian submanifolds (of the
symplectic structure of the phase space). The result
holds for arbitrary pseudo-metrics (on arbitrary dimensional
manifolds) included, as particular cases, Riemann and Minkowski
metrics.

We want to emphasize the possible relationship of this work with
the research of Dirac in \cite{Dirac1}(see also \cite{Parrot},
pp.190-3, for an interesting and modern treatment and
\cite{Dirac2,Dirac3} where Dirac himself extends the formalism in
order to include more general physical situations). In the above
cited paper the author re-examines the classical theory of
electrons: he first imposes the constancy of  $\|A\|^2$ as a
subsidiary condition on the ordinary action principle for the
electromagnetic field $F=dA$  (``{\sl the simplest relativistic
way of destroying the gauge transformations}'', in his words).
Then, a further analysis
 lead him to consider the vector field associated with the very $A$ as a possible motion of charges. On our part, we consider a natural
 kind of solutions of the Lorentz law (Lagrangian submanifolds). It turns out that the motions of charges thus described, consist of certain potentials
 $A$ which  are, necessarily, of constant length. In this way, it seems that we recover the kind of description that Dirac was
 looking for.

The structure of the paper is as follows. In the first section we
exhibit the formulation of Newton's second law as given in
\cite{MecanicaMunoz}, where its relationship with the Lagrangian
and Hamiltonian formulations becomes transparent. In the second
section, we describe the Newton equation when the force is given
by the Lorentz law and explain the equivalence with the no-force
case  at the cost of modifying the underlying symplectic
structure. In Section~3, we characterize velocity fields $u$ of
flows comprised of particles obeying Newton's law: \emph{field
solutions}. Finally, we derive the Klein-Gordon equation for field
solutions: in Section 4 for the free case and in Section 5 for the
charge-current tensor given by the Maxwell equations.

\vskip 2cm

\section{Newton's equation}
\bigskip

 Let $M$ be an smooth manifold endowed with a non-degenerate metric
 $T_2$ whose expression in local coordinates $q^\mu$ is
$$T_2=g_{\mu\nu}dq^\mu dq^\nu,\quad \textrm{det}\, (g_{\mu\nu})\ne 0.$$

The metric $T_2$ allows us to translate the canonical Liouville
1-form from the cotangent bundle $T^*M$ to the tangent bundle
$TM$. Let us denote by $\theta$ this translated form. Locally,
$$\theta=g_{\mu\nu}\dot q^\mu dq^\nu,$$
where, as usual, dotted functions mean $\dot f(v_x):=v_x(f)$ for
all tangent vector $v_x\in T_xM$, $x\in M$ and
$f\in\mathcal{C}^\infty(M)$. In this way, 2-form $\omega:=d\theta$
defines a symplectic structure on $TM$.

Let us denote by  $\pi\colon TM\to M$ the natural projection
$\pi(v_x)=x$. We will say that an 1-form $\alpha$ defined on the
manifold $TM$ is {\sl{horizontal}} if it kills all of the vertical
 (with respect to $\pi$) tangent vectors. In local coordinates,
$$\alpha=\alpha_\mu dq^\mu,\quad \alpha_\mu=\alpha_\mu(q,\dot q).$$

 A {\sl second order differential equation} is, by definition, a
vector field $D$ on the manifold $TM$ such that
$$\pi_*D_{v_x}=v_x,\quad \forall v_x\in TM.$$
This means that $D f=\dot f$ and then, locally,
$$D=\dot q^\mu\frac{\partial}{\partial {q^\mu}}+f^\mu(q,\dot
q)\frac{\partial}{\partial \dot q^\mu},$$ for suitable functions
$f^\mu$.

\medskip

Now, we will establish the Newton's second law ``$F=ma$'' in a very
convenient fashion.

\begin{defi}
The \sl{Newton's equation} associated with a horizontal 1-form
$\alpha$ (``work form''), is the only second order differential
equation $D$ such that
\begin{equation}\label{Newton}
D\lrcorner\,\omega+dT+\alpha=0
\end{equation}
where $T$ denotes the ``kinetic energy'' function defined by
$$T(v_x):=\frac 12 T_2(v_x,v_x),\quad \forall v_x\in TM,$$
(so that, locally, $T=(1/2)g_{\mu\nu}\dot q^\mu\dot q^\nu$).

\end{defi}

The reciprocal determination of  $D$ and $\alpha$ by means of
(\ref{Newton}) can be viewed in \cite{MecanicaMunoz}, from where
we have taken it: each ``force'' $\alpha$ determines an
``acceleration'' $D$ and vice versa. This proves the correctness
of the above definition. When $\alpha=0$ (no force is acting on
the system), equation (\ref{Newton}) is the equation of the
geodesic curves with respect to the metric $T_2$.

\bigskip

\section{Lorentz's force}

An electromagnetic field is defined as a closed 2-form $F$ on $M$,
$$F=F_{\mu\nu}dq^\mu\wedge dq^\nu,$$
and a potential for $F$ is an 1-form $A=A_\mu dq^\mu$ such that
 $F=dA$; that is to say,
 $$ F_{\mu\nu}=\frac 12\left(\frac{\partial A_\nu}{\partial q^\mu}-\frac{\partial A_\mu}{\partial
q^\nu}\right).$$

\begin{defi} The Lorentz force associated with $F$ is given by the 1-form $\alpha$ defined by
$$\alpha_{v_x}=v_x\lrcorner F,\quad \forall v_x\in TM.$$
(which depends on the velocities); locally,
$$\alpha=2F_{\mu\nu}\dot q^\mu\,dq^\nu.$$
The associated  Newton's equation (\ref{Newton}) now is
\begin{equation}\label{Newtonpotencial}
D\lrcorner(\omega+F)+dT=0\ .
\end{equation}
\end{defi}

\noindent Let us introduce the notation
$$\omega_F:=\omega+F$$
so that $\omega_F=d(\theta+A)$ if $F=dA$  and
 equation (\ref{Newtonpotencial}) can be rewritten as
\begin{equation}\label{Newtonpotencial2}
D\lrcorner\,\omega_F+dT=0\ ,
\end{equation}
which is interpreted as follows: $D$ is the Hamiltonian field
corresponding to the function  $T$, the kinetic energy, for the
symplectic structure $(TM,\omega_F)$.
\bigskip

\section{Field-solutions of Newton's equation}

We will deal now, not with the solution curves of Newton's
equation (which is of second order) taken individually, but rather
of vector fields whose integral curves are among the former
ones.\medskip

Let us observe that each parameterized curve on $M$ can be
naturally prolonged to $TM$: it is sufficient to consider the
velocity vector at each point.

A vector field $u\colon M\to TM$ is said to be an {\sl
intermediate integral} of a second order equation
 $D\colon TM\to TTM$ if and only if
 the prolongation to $TM$ of each integral curve of $u$ is also an integral curve of $D$.

The prolongation to $TM$ of integral curves of $u$, completely
fills the submanifold $Im(u)$, image of $u\colon M\to TM$. In
fact, $u_*u$ is the vector field on $Im(u)$ whose integral curves
are the prolongation of integral curves of $u$ (we denote by $u_*$
the tangent map associated with map $u$). As a consequence, the
condition
 for $u$ being and intermediate integral of $D$ is
\begin{equation}\label{intermedia}
   D_{u_x}=u_*u_x,\quad \forall x\in M. 
\end{equation}

On the other hand, it follows from the very definition of
Liouville 1-form that for an arbitrary vector field $u$, the
restriction of $\theta$ to the image of map $u$ is
\begin{equation}
u^*\theta=u\lrcorner T_2.
\end{equation}
 ($u^*$ denotes the pull-back induced by $u$).

 The intermediate integrals $u$ for Newton's equation
(\ref{Newton}) will be called {\sl field-solutions}. In this way,
and summing up the above considerations, we get the ``only if''
part of the following characterization.
\begin{prop}
A vector field $u$ is a field-solution of Newton's equation
(\ref{Newton}) if and only if
\begin{equation}\label{intermediaNewton}
u\lrcorner\,  d(u\lrcorner T_2)+dT(u)+u^*\alpha=0,
\end{equation}
where $T(u):=u^*T$.
\end{prop}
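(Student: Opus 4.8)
The plan is to translate the intermediate-integral condition \eqref{intermedia} into the language of differential forms by pulling back the defining equation \eqref{Newton} of Newton's equation along the map $u\colon M\to TM$. The "only if" direction has essentially been assembled already in the text: if $u$ is a field-solution, then $D_{u_x}=u_*u_x$ for all $x$, and contracting \eqref{Newton} with $u_*u_x$ (equivalently, pulling it back by $u$ and using that $u_*u$ is $u$-related to $u$) turns $D\lrcorner\omega$ into $u\lrcorner u^*\omega$. Since $\omega=d\theta$ and $u^*$ commutes with $d$, we have $u^*\omega = d(u^*\theta) = d(u\lrcorner T_2)$ by the displayed identity $u^*\theta = u\lrcorner T_2$. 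The remaining two terms pull back to $u^*(dT) = d(u^*T) = dT(u)$ (interpreting $dT(u)$ as $d(u^*T)$) and to $u^*\alpha$, giving precisely \eqref{intermediaNewton}.

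For the "if" direction I would argue that \eqref{intermediaNewton} forces $D_{u_x}=u_*u_x$. First I would note that both $D_{u_x}$ and $u_*u_x$ project under $\pi_*$ to $u_x$ itself: for $D$ this is the second-order equation property $\pi_*D_{v_x}=v_x$, and for $u_*u_x$ it is $\pi_*u_*u_x = (\pi\circ u)_* u_x = u_x$ since $\pi\circ u=\mathrm{id}_M$. Hence the difference $Z_x := D_{u_x}-u_*u_x$ is a $\pi$-vertical tangent vector at $u_x\in TM$. Now contract the Newton equation \eqref{Newton} with $D_{u_x}$ and separately with $u_*u_x$ and subtract: the $dT$ and $\alpha$ terms are killed because $D f = \dot f$ along $D$ while along $u_*u$ one gets the same pulled-back values — more carefully, I would instead contract \eqref{Newton} with $Z_x$, obtaining $Z_x\lrcorner(D\lrcorner\omega) + Z_x\lrcorner dT + Z_x\lrcorner\alpha = 0$. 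Since $\alpha$ is horizontal and $Z_x$ is vertical, $Z_x\lrcorner\alpha=0$; and $Z_x\lrcorner dT = Z_x(T)$ together with $Z_x\lrcorner(D\lrcorner\omega) = \omega(Z_x,D_{u_x})$ can be computed in the local coordinates $q^\mu,\dot q^\mu$ using $\omega = d(g_{\mu\nu}\dot q^\mu dq^\nu)$ and $T=\tfrac12 g_{\mu\nu}\dot q^\mu\dot q^\nu$.

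The crux is to show that the hypothesis \eqref{intermediaNewton} — which a priori only says that the contraction of \eqref{Newton} along $u_*u$ vanishes — already pins down $D$ along $\mathrm{Im}(u)$ uniquely, i.e. that $Z\equiv 0$. This is where I expect the main obstacle to lie, and the clean way through it is the non-degeneracy built into the definition: equation \eqref{Newton} determines $D$ uniquely given $\alpha$ (the reciprocity statement cited from \cite{MecanicaMunoz}), because $\omega$ is symplectic and the vertical-tangency constraints fix the rest. Concretely, writing $D = \dot q^\mu\partial_{q^\mu} + f^\mu\partial_{\dot q^\mu}$, equation \eqref{Newton} becomes an inhomogeneous linear system for the $f^\mu$ whose coefficient matrix is $(g_{\mu\nu})$, hence invertible; so the vertical part $f^\mu(u(x),\cdot)$ is forced, and the horizontal part is forced to be $u_x$ by the second-order condition. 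Thus any second-order field whose contraction along $u_*u$ satisfies \eqref{intermediaNewton} must agree with the Newton field $D$ at the points of $\mathrm{Im}(u)$, which is exactly $D_{u_x}=u_*u_x$, i.e. $u$ is a field-solution. I would present the coordinate computation only to the extent needed to exhibit the matrix $(g_{\mu\nu})$ and invoke its invertibility, leaving the routine index manipulations implicit.
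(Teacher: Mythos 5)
Your ``only if'' direction is fine and is exactly the paper's: $u_*u$ is $u$-related to $u$, so $u^*\bigl((u_*u)\lrcorner\,\omega\bigr)=u\lrcorner\, u^*\omega=u\lrcorner\, d(u\lrcorner T_2)$, and the remaining terms pull back as you state. In the ``if'' direction you correctly isolate the key object, the vertical difference $Z_x=D_{u_x}-u_*u_x$, but the route you then take contains a step that fails and a step that is not justified. Contracting (\ref{Newton}) with $Z_x$ is vacuous: (\ref{Newton}) says the $1$-form $D\lrcorner\,\omega+dT+\alpha$ is identically zero, so its value on any vector is the tautology $0=0$; concretely, $\omega(D,Z)=-g_{\mu\nu}Z^\mu\dot q^\nu$ cancels $Z(T)=g_{\mu\nu}Z^\mu\dot q^\nu$ for \emph{every} vertical $Z$, so this computation yields no information about $Z$. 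And the uniqueness you invoke at the end --- that (\ref{Newton}) determines the $f^\mu$ through the invertible matrix $(g_{\mu\nu})$ --- concerns the full $1$-form equation at a point of $TM$ ($2n$ scalar conditions), whereas the hypothesis (\ref{intermediaNewton}) only records the pullback by $u$, i.e.\ the restriction of that $1$-form to vectors tangent to the $n$-dimensional submanifold $\mathrm{Im}(u)$. You never explain why this weaker, pulled-back system still pins down the $n$ vertical unknowns; that is precisely the point at issue.

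The missing one-line idea, which is what the paper's proof uses: subtract the $u$-pullback of the true Newton identity $D\lrcorner\,\omega+dT+\alpha=0$ from (\ref{intermediaNewton}) to obtain $u^*(Z\lrcorner\,\omega)=0$. Since $Z$ is vertical and $\omega=d(g_{\mu\nu}\dot q^\mu dq^\nu)$, the $1$-form $Z\lrcorner\,\omega=g_{\mu\nu}Z^\mu\,dq^\nu$ is \emph{horizontal}, and the $u$-pullback of a horizontal form merely evaluates its coefficients along $\mathrm{Im}(u)$; hence $g_{\mu\nu}Z^\mu=0$ and, by non-degeneracy of the metric, $Z=0$, i.e.\ $D_{u_x}=u_*u_x$. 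With this inserted in place of the two problematic steps, your argument closes and coincides with the paper's.
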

\begin{proof} 
The ``if'' part is due to $D_{u_x}-u_*u_x$ being vertical and then
the vanishing of $u^*((D_{u_x}-u_*u_x)\lrcorner\, \omega)$ implies
$D_{u_x}-u_*u_x=0$.
\end{proof}
\bigskip

\section{Field-solutions for the Lorentz force}

From now on, we will follow and extend a convention from
\cite{Parrot}: With each $(p,q)$ tensor field  $S$ on $M$, we will
denote $S^\star$ the $(q,p)$ tensor that is obtained when the
covariant components are converted into contravariant components
and vice versa, by means of the metric $T_2$. For instance, if $u$
is a tangent vector field on $M$,  $u^\star$ will be the 1-form
 $$u^\star:=u\lrcorner T_2,$$
or, in local coordinates,
$$u=u^\mu\frac{\partial}{\partial q^\mu},\quad u^\star=u_\mu dq^\mu,$$
where $u_\mu:=g_{\mu\nu}u^\nu$, according to the usual procedure
for raising and lowering indexes.

  Newton's equation
(\ref{intermediaNewton}) for a field $u$ under the Lorentz force
induced by $F=dA$  is
\begin{equation}\label{Lorentz2}
u\lrcorner\, d(u^\star+\,A)+\,dT(u) =0.
\end{equation}
In such a case, kinetic energy $T(u)$ is a first integral of $u$.
\medskip

A particularly important class of field solutions $u$ of
(\ref{Lorentz2}) is given by the set of Lagrangian submanifolds
(with respect to $\omega_F$).
Explicitly, this is equivalent to consider those tangent fields
$u\colon M\to TM$ such that, locally,
\begin{equation}\label{lagrangian}
u^\star+A=df \end{equation}
 for a  local smooth function $f$ defined on
$M$. In this case, we derive $dT(u)=0$ and then,
\begin{lema}
 The Lagrangian solutions $u$ of equation (\ref{Lorentz2}) have
 constant kinetic energy
 \begin{equation}\label{HJ}
 T(u)=\frac 12 m^2,
 \end{equation}
  ($m$ can be real or
imaginary, according to the sign of the kinetic energy).
\end{lema}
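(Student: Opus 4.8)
The plan is to exploit the two hypotheses in tandem: first, that $u$ is a field-solution for the Lorentz force, so equation (\ref{Lorentz2}) holds, and second, that $u$ is Lagrangian, so locally $u^\star+A=df$ as in (\ref{lagrangian}). The key observation is that the first term of (\ref{Lorentz2}), namely $u\lrcorner\,d(u^\star+A)$, involves $d(u^\star+A)$, which by (\ref{lagrangian}) equals $d(df)=0$. Hence that entire term vanishes identically on $\mathrm{Im}(u)$, and (\ref{Lorentz2}) collapses to $dT(u)=0$.

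From $d(T(u))=0$ we conclude that $T(u)$ is locally constant on $M$ (assuming, as is implicit, that $M$ is connected). Writing this constant as $\tfrac12 m^2$ is then purely a matter of notation: since $T(u)=\tfrac12 T_2(u,u)$ and $T_2$ is only a pseudo-metric, the constant $T_2(u,u)$ can have either sign, so $m$ is real when it is non-negative and imaginary otherwise. This gives exactly (\ref{HJ}).

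There is really no main obstacle here — the content of the lemma is the remark, already made in the excerpt just before the statement, that "In this case, we derive $dT(u)=0$." The only points requiring a word of care are: (i) checking that (\ref{lagrangian}) is indeed the local coordinate description of $u$ being Lagrangian with respect to $\omega_F=d(\theta+A)$, so that pulling back $\omega_F$ by $u$ gives $u^*\omega_F=d(u^\star+A)=0$; and (ii) noting that (\ref{lagrangian}) is a \emph{local} statement (the function $f$ need only exist locally), but $d(u^\star+A)=0$ is global, so the vanishing of the first term of (\ref{Lorentz2}) and hence the constancy of $T(u)$ hold globally on each connected component. I would therefore write the proof in two short sentences: substitute (\ref{lagrangian}) into (\ref{Lorentz2}) to kill the first summand, then integrate $dT(u)=0$.

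\begin{proof}
By (\ref{lagrangian}) we have, locally, $u^\star+A=df$, hence $d(u^\star+A)=0$ and the first summand of (\ref{Lorentz2}) vanishes. Equation (\ref{Lorentz2}) thus reduces to $dT(u)=0$, so that $T(u)$ is constant on $M$; denoting this constant by $\tfrac12 m^2$ we obtain (\ref{HJ}). Since $T_2$ is only assumed non-degenerate, the constant $T_2(u,u)=2T(u)$ may be negative, in which case $m$ is imaginary.
\end{proof}
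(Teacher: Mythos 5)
Your proposal is correct and is essentially the paper's own argument: the paper obtains the lemma by noting that $u^\star+A=df$ forces $d(u^\star+A)=0$, so (\ref{Lorentz2}) collapses to $dT(u)=0$ and $T(u)$ is constant. Your added remarks on connectedness and on the local-versus-global nature of (\ref{lagrangian}) are sensible but do not change the route.
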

\bigskip

\section{Klein-Gordon equation for conservative geodesic fields}

As a first case, we will derive the Klein-Gordon equation without
coupling; so, no electromagnetic field is present: $F=0$.

Let $\delta$ denote the divergence operator associated with the
(pseu\-do-rie\-man\-nian) metric $T_2$. On vector fields, $\delta$
is the standard divergence operator. For an 1-form $\sigma$, we
have $\delta\sigma=\delta(\sigma^\star)$. The Laplacian operator
on differential forms is given by
$$\Delta:=d\delta+\delta d,$$
which simplifies to $\Delta=\delta d$ on functions (0-forms). As
well known, in the case of the Minkowski metric, $\Delta$ symbol
is substituted by $\square$ and named D'Alambertian operator.

\begin{lema}\label{lemalaplaciana}
For each smooth function $\phi$ on $M$ it holds
 $$\Delta e^{i\phi}=e^{i\phi}\left(i\Delta \phi-T^2(d\phi,d\phi)\right).$$
\end{lema}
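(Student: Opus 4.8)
The plan is to compute $\Delta e^{i\phi} = \delta d\, e^{i\phi}$ directly by applying the product-type identities for $d$ and $\delta$ to the exact $1$-form $d(e^{i\phi}) = i e^{i\phi}\, d\phi$. First I would write $d e^{i\phi} = i e^{i\phi}\, d\phi$, so that $\Delta e^{i\phi} = \delta\bigl(i e^{i\phi}\, d\phi\bigr)$. Next I would use the Leibniz rule for the codifferential acting on the product of a function $g = i e^{i\phi}$ and a $1$-form $\sigma = d\phi$. The relevant formula is $\delta(g\,\sigma) = g\,\delta\sigma - \sigma^\star(dg) = g\,\delta\sigma - T_2^\star(dg, \sigma)$, where the last term pairs the two $1$-forms via the inverse metric (the paper's $T^2$, i.e. $S^\star$ applied to $T_2$). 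Here $\delta d\phi = \delta\sigma = \Delta\phi$ since $\phi$ is a function, and $dg = i e^{i\phi}\, i\, d\phi = -e^{i\phi}\, d\phi$. Substituting gives $\Delta e^{i\phi} = i e^{i\phi}\,\Delta\phi - T^2\bigl(-e^{i\phi}\,d\phi,\ d\phi\bigr) = e^{i\phi}\bigl(i\Delta\phi - T^2(d\phi, d\phi)\bigr)$, which is the claim.

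The one point that deserves care — and which I expect to be the main (if modest) obstacle — is pinning down the correct sign and form of the Leibniz rule for $\delta$ on $g\,\sigma$, since sign conventions for the codifferential vary and the paper has only said that on functions $\Delta = \delta d$. I would either derive the needed identity from the defining relation $\delta = -\ast^{-1} d\, \ast$ (up to the sign fixed implicitly by $\Delta = \delta d$ being the "right" sign on functions, i.e. $\Delta\phi$ agreeing with the usual Laplace–Beltrami trace $g^{\mu\nu}\nabla_\mu\nabla_\nu\phi$ up to a fixed sign), or simply verify it in local coordinates: with $\sigma = \sigma_\mu\, dq^\mu$ one has $\delta(g\sigma) = -\tfrac{1}{\sqrt{|g|}}\partial_\mu\bigl(\sqrt{|g|}\, g^{\mu\nu} g\,\sigma_\nu\bigr) = g\,\delta\sigma - g^{\mu\nu}(\partial_\mu g)\sigma_\nu$, and the last term is exactly $T^2(dg,\sigma)$ with the convention used in the statement. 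Either route is a routine check; once the sign is fixed the rest is immediate substitution.

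Alternatively, one can bypass forms entirely and compute in coordinates from the start: $\Delta e^{i\phi} = \tfrac{1}{\sqrt{|g|}}\partial_\mu\bigl(\sqrt{|g|}\, g^{\mu\nu}\partial_\nu e^{i\phi}\bigr)$, expand $\partial_\nu e^{i\phi} = i e^{i\phi}\partial_\nu\phi$, differentiate the product, and collect the term with $\partial_\mu\partial_\nu\phi$ (giving $i e^{i\phi}\Delta\phi$) and the term with $\partial_\mu\phi\,\partial_\nu\phi$ (giving $i \cdot i\, e^{i\phi} g^{\mu\nu}\partial_\mu\phi\,\partial_\nu\phi = -e^{i\phi} T^2(d\phi,d\phi)$). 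This makes the two contributions transparent and leaves no ambiguity about signs, at the cost of being less coordinate-free; I would likely present the invariant computation in the body and relegate the coordinate check of the Leibniz identity to a parenthetical remark.
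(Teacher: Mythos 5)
Your strategy is the same as the paper's: write $d\,e^{i\phi}=i e^{i\phi}\,d\phi$ and apply a Leibniz rule for $\delta$ to the product; the paper does exactly this, only phrased on the vector field side, using $\delta(\lambda v)=v(\lambda)+\lambda\,\delta v$ with $\lambda=ie^{i\phi}$ and $v=(d\phi)^\star$. The genuine problem is precisely the point you flagged as delicate, and you resolved it the wrong way for this paper. Here $\delta$ on a $1$-form is \emph{defined} by $\delta\sigma=\delta(\sigma^\star)$, with $\delta$ the ordinary divergence on vector fields, so there is no minus sign: in coordinates $\delta\sigma=\frac{1}{\sqrt{|g|}}\partial_\mu\left(\sqrt{|g|}\,g^{\mu\nu}\sigma_\nu\right)$, and the correct Leibniz rule is $\delta(g\sigma)=g\,\delta\sigma+T^2(dg,\sigma)$. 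Your middle paragraph instead takes the Hodge-codifferential convention $\delta=-\frac{1}{\sqrt{|g|}}\partial_\mu\left(\sqrt{|g|}\,g^{\mu\nu}\,\cdot\,\right)$, which contradicts the paper's definition and also contradicts the positive-sign coordinate Laplacian you yourself use in your last paragraph. Moreover, with your minus-sign rule the displayed substitution is not correct as algebra: $i e^{i\phi}\Delta\phi - T^2\left(-e^{i\phi}d\phi,\,d\phi\right)=e^{i\phi}\left(i\Delta\phi+T^2(d\phi,d\phi)\right)$, not the claimed $e^{i\phi}\left(i\Delta\phi-T^2(d\phi,d\phi)\right)$; the lemma's identity only comes out with the plus-sign rule (indeed, with the Hodge convention the lemma itself would read with $+T^2(d\phi,d\phi)$).

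The fix is minor: drop the minus-sign version of the Leibniz identity and use $\delta(g\sigma)=g\,\delta\sigma+T^2(dg,\sigma)$, equivalently the paper's vector-field identity $\delta(\lambda v)=v(\lambda)+\lambda\,\delta v$; then your substitution gives $ie^{i\phi}\Delta\phi+T^2(-e^{i\phi}d\phi,d\phi)=e^{i\phi}\left(i\Delta\phi-T^2(d\phi,d\phi)\right)$ as required. Your final coordinate computation of $\Delta e^{i\phi}=\frac{1}{\sqrt{|g|}}\partial_\mu\left(\sqrt{|g|}\,g^{\mu\nu}\partial_\nu e^{i\phi}\right)$ is correct, consistent with the paper's convention, and by itself already proves the lemma, so the proposal is salvageable essentially by deleting the inconsistent paragraph.
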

\begin{proof}
 By definition,
   $$\Delta e^{i\phi}=\delta d(e^{i\phi})=\delta(ie^{i\phi}\,d\phi)
                  =\delta (i e^{i\phi}(d\phi)^\star)
                  =(d\phi)^\star(i e^{i\phi})+i e^{i\phi}\delta (d\phi)^\star,
                  $$
 where we applied the identity
  $\delta(\lambda v)=v(\lambda)+\lambda\delta v$
 for the  function $\lambda=i e^{i\phi}$ and the vector field $v=(d\phi)^\star$.

 But, $$(d\phi)^\star(i e^{i\phi})=i^2 e^{i\phi}(d\phi)^\star(\phi)=-e^{i\phi} T^2(d\phi, d\phi)$$ and
 $\delta (d\phi)^\star=\delta d\phi=\Delta\phi$, which finishes the proof.
 \end{proof}

Now, let us consider the equation for the geodesic vector fields.
This corresponds to take equation (\ref{Lorentz2}) for $F=0$,
\begin{equation}\label{geodesic}
u\lrcorner\, du^\star+\,dT(u) =0.
\end{equation}

A class of solutions of (\ref{geodesic}) is given by those vector
fields $u$ such that $du^\star=0$. Equivalently, with respect to
the symplectic manifold $(TM,\omega)$, we are considering  the
local Lagrangian submanifolds $u^\star=df$ for some function
$f\in\mathcal{C}^\infty(M)$.
\medskip

In particular, $T(u)=\frac 12 T_2(u,u)=\frac 12 T_2(df,df)=\frac
12 m^2$.
\medskip

Moreover, let us assume that field $u$ is conservative:
$$\delta u=0.$$
Then $\Delta f=0$ and Lemma \ref{lemalaplaciana}, when applied on
$\phi=\frac 1\hbar f$, gives
 $$\Delta e^{i\frac f\hbar}
    =e^{i\frac f\hbar}\left(i\frac 1\hbar\,\Delta f-\frac 1{\hbar^2}\,T^2(df,df)\right)
    =-e^{i\frac f\hbar}\frac{m^2}{\hbar^2}.$$
 So that,
 \begin{prop}
 Conservative geodesic fields $u^\star=df$ hold Klein-Gordon
 equation
 $$\left(\Delta+\frac{m^2}{\hbar^2}\right)\psi=0,$$
 for $\psi:=e^{i\frac f\hbar}$ and $m^2=T_2(u,u)$.
 \end{prop}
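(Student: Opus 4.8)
The plan is to assemble the proof from the three ingredients already laid out in the section: the algebraic identity of Lemma~\ref{lemalaplaciana}, the constancy of the kinetic energy for a Lagrangian solution, and the conservativity hypothesis. First I would record that the hypothesis $u^\star=df$ means $u=(df)^\star$, so that $T_2(u,u)=T_2(df,df)=2T(u)$; since $u$ is (by the previous discussion) a field-solution of \eqref{geodesic} of Lagrangian type, equation \eqref{HJ} gives $T_2(df,df)=m^2$, a constant. This is the Hamilton--Jacobi content and requires no new computation.

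Next I would translate the conservativity assumption $\delta u=0$ into a statement about $f$. By the definitions recalled just before Lemma~\ref{lemalaplaciana}, $\delta(df)=\delta\big((df)^\star\big)=\delta u=0$, and on $0$-forms $\Delta=\delta d$, so $\Delta f=0$. That is the only place the hypothesis enters, and it is immediate.

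Then I would apply Lemma~\ref{lemalaplaciana} with $\phi=f/\hbar$. The lemma yields
$$\Delta e^{i f/\hbar}=e^{i f/\hbar}\Big(i\,\Delta(f/\hbar)-T^2\big(d(f/\hbar),d(f/\hbar)\big)\Big)
 =e^{i f/\hbar}\Big(\tfrac{i}{\hbar}\,\Delta f-\tfrac{1}{\hbar^{2}}\,T^2(df,df)\Big).$$
Substituting $\Delta f=0$ and $T^2(df,df)=T_2(df,df)=m^2$ (identifying $T^2$ on $1$-forms with $T_2$ on the dual vectors, i.e. $T^2(df,df)=T_2((df)^\star,(df)^\star)$), the right-hand side collapses to $-\,e^{i f/\hbar}\,m^2/\hbar^2$. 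Writing $\psi:=e^{i f/\hbar}$ this reads $\Delta\psi=-(m^2/\hbar^2)\psi$, i.e. $\big(\Delta+m^2/\hbar^2\big)\psi=0$, which is the asserted Klein--Gordon equation.

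I do not anticipate a genuine obstacle: every step is either a definition unwound or a direct citation of the two preceding lemmas. The only point deserving a line of care is the notational identification between $T^2$ (the contravariant metric acting on $1$-forms) and $T_2$ (acting on vectors) via the $\star$ isomorphism, so that $T^2(df,df)=T_2(u,u)=m^2$; and the remark that $f$ is only locally defined, so $\psi$ and the conclusion are a priori local, which is consistent with the Lagrangian-submanifold setup of the previous section. One should also note in passing that imaginary $m$ is permitted (as flagged in the statement of \eqref{HJ}), which merely flips the sign of the mass term but does not affect the derivation.
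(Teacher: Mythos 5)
Your proposal is correct and follows essentially the same route as the paper: constancy of $T_2(df,df)=m^2$ from the Lagrangian (geodesic) condition, $\Delta f=0$ from $\delta u=0$, and then Lemma~\ref{lemalaplaciana} applied to $\phi=f/\hbar$. Your added remarks on the $T^2$/$T_2$ identification via $\star$ and the local nature of $f$ are consistent with the paper's conventions and do not change the argument.
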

\bigskip

\section{Klein-Gordon equation for a Maxwell field}

 The first Maxwell equation for electromagnetic field $F=dA$ is
 $$\delta F=J^\star,$$
 where $J$ is the so called charge-current vector field. In particular, field $J$ is
 conservative because $\delta J=\delta^2F=0$.

 Let us assume that the vector field $u=J$ gives a Lagrangian solution of  Newton's equation for the Lorentz
 force (\ref{Lorentz2}) so that, locally exists a function $f$ such that
 $$J^\star+A=u^\star+A=df.$$
 (This is equivalent to say that $-J^\star$ is a vector potential for $F$ as was proposed by Dirac in \cite{Dirac1}).
 In addition, let us choose a potential $A$ with the {\sl Lorentz gauge}
  $$\delta A=0.$$
 In this case,
 $$\Delta f=\delta u^\star+\delta A=\delta J+\delta A=0.$$
 Now, the same computation of the previous section, gives us
 $$\Delta e^{i\frac f\hbar}
    =e^{i\frac f\hbar}\left(i\frac 1\hbar\,\Delta f-\frac 1{\hbar^2}\,T^2(df,df)\right)
    =-\frac 1{\hbar^2}e^{i\frac f\hbar}\,T^2(df,df).$$
 On the other hand,
 $$m^2=T^2(u^\star,u^\star)=T^2(-A+df,-A+df)
      =\|A\|^2-2T^2(A,df)+T^2(df,df),$$
 so that, taking into account  $T^2(A,df)=A^\star(f)$ (= the
 derivative of $f$ along
 the vector field $A^\star$),
 $$T^2(df,df)=m^2-\|A\|^2+2A^\star(f).$$
  It follows that
 $$\Delta e^{i\frac f\hbar}
        =-\frac 1{\hbar^2}e^{i\frac
        f\hbar}\,\left(m^2-\|A\|^2+2A^\star(f)\right)$$
 or
 $$\left(\Delta - 2\frac i\hbar A^\star+ \frac 1{\hbar^2}\left(m^2-\|A\|^2\right)\right)e^{i\frac f\hbar}=0$$
 where we have applied that $A^\star(e^{i\frac f\hbar})=\frac
 i{\hbar}e^{i\frac f\hbar}A^\star(f)$.
 Summing up,
 \begin{thm}
 If the Newton equation (\ref{Lorentz2}) for the Lorentz force given by $F=dA$, $\delta A=0$, admits a Lagrangian
 solution
    $J^\star+A=df,$ where $J^\star:=\delta F$,
 then $T_2(J,J)$ is a constant, say $m^2$ (positive or negative), and the exponential $\psi=e^{i\frac f\hbar}$ satisfies  the Klein-Gordon
 equation
 \begin{equation}\label{KGconA}
 \left(\Delta - 2\frac i\hbar A^\star+ \frac 1{\hbar^2}\left(m^2-\|A\|^2\right)\right)\psi=0
 \end{equation}
 \end{thm}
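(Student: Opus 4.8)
The plan is to run exactly the computation already carried out in the text preceding the theorem statement, organizing it into three steps. First I would establish the constancy of $T_2(J,J)$: since $u=J$ gives a Lagrangian solution, we have $u^\star+A=df$ locally, so by the Lemma in Section~4 the kinetic energy $T(u)=\tfrac12 T_2(J,J)$ is constant; call it $\tfrac12 m^2$, with $m^2$ possibly negative depending on the signature. This is immediate from the earlier lemma and requires no further work.

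Second, I would verify the gauge/divergence condition $\Delta f=0$. From $df=J^\star+A$ we get $\Delta f=\delta d f=\delta(J^\star)+\delta A$. Now $\delta(J^\star)=\delta J$, and the first Maxwell equation $\delta F=J^\star$ forces $\delta J=\delta(\delta F)=0$ since $\delta^2=0$; combined with the Lorentz gauge hypothesis $\delta A=0$ this gives $\Delta f=0$. Then I would apply Lemma~\ref{lemalaplaciana} with $\phi=f/\hbar$ to obtain
$$\Delta e^{i f/\hbar}=e^{i f/\hbar}\Bigl(\tfrac{i}{\hbar}\Delta f-\tfrac{1}{\hbar^2}T^2(df,df)\Bigr)=-\tfrac{1}{\hbar^2}e^{i f/\hbar}\,T^2(df,df).$$

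Third, I would re-express $T^2(df,df)$ in terms of the data. Writing $df=A+J^\star$ and expanding bilinearly, $T^2(df,df)=\|A\|^2+2T^2(A,J^\star)+T^2(J^\star,J^\star)$. The middle term is $2A^\star(f)$ (the derivative of $f$ along the raised vector field $A^\star$, using $T^2(A,df)=A^\star(f)$), and the last term is $m^2$ by Step~1. Hence $T^2(df,df)=m^2+\|A\|^2+2A^\star(f)$. (The sign in the paper's intermediate line comes from writing $df-A$ rather than $df$; one must be careful here, but either way the final combination below is the same after using $A^\star(e^{if/\hbar})=\tfrac{i}{\hbar}e^{if/\hbar}A^\star(f)$ to trade the $A^\star(f)$ term for a first-order differential operator acting on $\psi$.) Substituting back and rearranging yields
$$\Bigl(\Delta-2\tfrac{i}{\hbar}A^\star+\tfrac{1}{\hbar^2}\bigl(m^2-\|A\|^2\bigr)\Bigr)e^{if/\hbar}=0,$$
which is \eqref{KGconA} with $\psi=e^{if/\hbar}$.

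The only genuinely delicate point is bookkeeping the signs and the identification $T^2(A,df)=A^\star(f)$, together with the fact that the statement is purely local (the function $f$ exists only locally, but $\psi=e^{if/\hbar}$ and all operators involved are local, so the resulting PDE holds wherever $f$ is defined, hence globally if one patches, at least up to the multivalued phase). There is no serious analytic obstacle: everything reduces to Lemma~\ref{lemalaplaciana}, the nilpotency $\delta^2=0$, and bilinear expansion of the metric.
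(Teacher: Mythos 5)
Your Steps 1 and 2 are correct and coincide with the paper's argument: the constancy of $T_2(J,J)$ follows from the Lemma of Section~4 applied to the Lagrangian condition $u^\star+A=df$, and $\Delta f=\delta J+\delta A=0$ follows from $\delta^2=0$ together with the Lorentz gauge, after which Lemma~\ref{lemalaplaciana} gives $\Delta\psi=-\frac{1}{\hbar^2}\psi\,T^2(df,df)$. The gap is in your Step 3. You expand $T^2(df,df)=\|A\|^2+2T^2(A,J^\star)+T^2(J^\star,J^\star)$ and then assert that the middle term equals $2A^\star(f)$ ``using $T^2(A,df)=A^\star(f)$''. But $J^\star$ is not $df$; it is $df-A$. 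Hence $2T^2(A,J^\star)=2T^2(A,df)-2\|A\|^2=2A^\star(f)-2\|A\|^2$, and the correct identity is $T^2(df,df)=m^2-\|A\|^2+2A^\star(f)$, not $m^2+\|A\|^2+2A^\star(f)$ as you wrote. This is not a harmless bookkeeping issue: carrying your expression through the substitution yields $\bigl(\Delta-\tfrac{2i}{\hbar}A^\star+\tfrac{1}{\hbar^2}(m^2+\|A\|^2)\bigr)\psi=0$, i.e.\ the wrong sign on $\|A\|^2$ in \eqref{KGconA}; your parenthetical claim that ``either way the final combination below is the same'' is therefore false, since the sign of $\|A\|^2$ in the Klein--Gordon operator depends precisely on this cross term. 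The paper sidesteps the slip by expanding in the other direction, $m^2=T^2(df-A,\,df-A)=\|A\|^2-2A^\star(f)+T^2(df,df)$, and solving for $T^2(df,df)$. With that one correction your argument becomes the paper's proof verbatim; everything else (locality of $f$, the identity $A^\star(e^{if/\hbar})=\tfrac{i}{\hbar}e^{if/\hbar}A^\star(f)$, and the use of $\delta^2=0$) is handled as in the text.
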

 \medskip

 Because function  $f$ is supposed to be real, equation (\ref{KGconA}) for $\psi=e^{if/\hbar}$ implies that $\Delta f=0$
 and $T_2(u,u)=m^2$, for $u^\star:=-A+df$. As a consequence, $u$ holds Newton's equation (\ref{Lorentz2}) and $\delta u=0$ (and conversely).
 However, the condition of being $u=\delta F$ can not be assured.
 \bigskip

 \section*{Aknowledgements}
 This work had its origin in the sessions of the seminar on mathematical foundations of physics organized by Professor Jesús Muñoz Díaz,
   my thesis advisor and dear friend. In the course of these investigations, he generously shared with me many  developments, results,
   insights and
   key ideas. In fact, this article relies very strongly on that
   material.
 So the possible merits of these pages, if any, are to be attributed more to him than myself. Therefore, my deepest gratitude.
 \bigskip

\renewcommand*{\refname}{References}


\begin{thebibliography}{9}
\bibitem{Dirac1}  Dirac, P.A.M., A new classical theory of electrons , Proc. R. Soc. Lond. A, 1951 vol.~209, no.
1098, pp. 291--296.

\bibitem{Dirac2}  Dirac, P.A.M., A New Classical Theory of Electrons II, Proc. R. Soc. Lond. A, 1952 vol. 212, no. 1110, pp.
330--339.

 \bibitem{Dirac3} Dirac, P.A.M., A New Classical Theory of Electrons III, Proc. R. Soc. Lond. A, 1954 vol. 223, no. 1555. pp.
 438--445.

\bibitem{MecanicaMunoz} Mu\~{n}oz D\'{\i}az, J., The structure of
time and inertial forces in Lagrangian mechanics, Contemporary
Mathematics, vol. 549, 2011, pp. 65--94.


\bibitem{Parrot} Parrott, S., \emph{Relativistic Electrodynamics and Differential Geometry}, Springer-Verlag, Nueva York
(1987).



\bibitem{Schrodinger} Schrödinger, E., Quantisierung als Eigenwertproblem, 
Annalen der Physik, (4), 79, (1926), 361-376.

\end{thebibliography}
\end{document}